\begin{document}
\title{Abelian-square factors and binary words}
%
%
\author{Salah Triki
}
%
\authorrunning{S. Triki}
%
\institute{
Mir@cl Laboratory\\
University of Sfax, Tunisia \\
\email{salah.triki@fsegs.rnu.tn}
}
\maketitle              
\begin{abstract}
In this work, we affirm the conjecture proposed by Gabriele Fici and Filippo Mignosi in \cite{FiciM15}.
\keywords{Abelian-square \and Factor \and Binary word}
\end{abstract}
%
%
%






\section{Definitions}
\begin{definition}
A word $w$ is called a $factor$ of a word $u$ if there exists words x, y such that $u=xwy$.
\end{definition}
\begin{definition}
An abelian-square is a word of length $2n$ where the first $n$ symbols form an anagram of the last $n$ symbols.
\end{definition}

\section{Conjecture}
\begin{lemma}
Let $w$ is a word of length $n$, containing $k$
many distinct abelian-square factors, and with the last symbol is in an abelian-square factor. Then a binary word of length $n$
containing at least $k$ many distinct abelian-square factors, and with the last symbol is in an abelian-square factor, exists.
\end{lemma}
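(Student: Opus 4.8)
The plan is to shrink the alphabet one letter at a time, so that the statement follows by induction on the number $m$ of distinct letters occurring in $w$. If $m\le 2$ then $w$ is, after renaming, already a binary word and there is nothing to do. So assume $m\ge 3$, pick two letters $a,b$ occurring in $w$, and let $\pi$ be the morphism identifying $a$ with $b$ and fixing every other letter; then $\pi(w)$ has length $n$ and uses at most $m-1$ letters. The core of the argument is that $a$ and $b$ can be chosen so that $\pi(w)$ still contains at least $k$ distinct abelian-square factors and still has its last symbol inside one of them; feeding $\pi(w)$ into the induction hypothesis (stated for words on at most $m-1$ letters) then completes the proof.

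Two of the three points to verify are immediate. First, $\pi$ maps abelian squares to abelian squares: if $f=f_1f_2$ with $|f_1|=|f_2|$ and $f_1,f_2$ of equal Parikh vectors, then $\pi(f_1)$ and $\pi(f_2)$ still have equal Parikh vectors (deleting a coordinate of a Parikh vector is a linear operation), so $\pi(f)$ is again an abelian square of the same even length. Second, applying this to an abelian-square suffix of $w$ shows $\pi$ sends it to an abelian-square suffix of $\pi(w)$ of the same length, so the condition on the last symbol survives for \emph{every} choice of $\pi$.

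The genuine difficulty is that, although $\pi$ never destroys the abelian-square property of a factor, it may identify two \emph{different} abelian-square factors of $w$, so the number of \emph{distinct} abelian-square factors could in principle decrease. I would control this with a compensation argument. Say $\{f,f'\}$ is a collision for the pair $\{a,b\}$ if $f\ne f'$ are abelian-square factors of $w$ with $\pi(f)=\pi(f')$; then necessarily $|f|=|f'|$ and $f,f'$ coincide outside a set $S$ of positions on which each of them carries a letter of $\{a,b\}$, and the number of admissible $\{a,b\}$-patterns on $S$ that turn an occurrence into an abelian square is a fixed binomial-type count determined by $S$ and by the common part. The aim is to attach to each collision a fresh abelian-square factor of $\pi(w)$ — one not of the form $\pi(g)$ for an abelian-square factor $g$ of $w$ — coming from an occurrence in $w$ that just barely fails to be an abelian square because an $a$ sits where a $b$ is needed (or vice versa), and to check that these freshly created factors can be matched injectively with the lost ones. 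It may be cleanest to run the whole count through the Parikh walk $P_0,P_1,\dots,P_n$ of $w$, where $P_i$ is the Parikh vector of the length-$i$ prefix and an abelian-square factor $w[i+1..j]$ is precisely a pair $i<j$ with $i\equiv j\pmod 2$ and $P_{(i+j)/2}=\tfrac12(P_i+P_j)$, and to track how collapsing one coordinate changes the number of such configurations modulo the relation ``same factor''. I expect this bookkeeping — exhibiting \emph{some} pair $\{a,b\}$ for which the gains dominate the losses, possibly by first reducing to the least (or most) frequent letter, or by an averaging argument over all $\binom{m}{2}$ candidate pairs — to be the main obstacle; everything else is the routine induction sketched above.
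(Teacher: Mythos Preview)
Your approach is entirely different from the paper's. The paper argues by induction on the length $n$: given a word $w$ of length $n$ with binary image $w'$, it extends $w$ to $wx$ (where $x$ is the last symbol of $w$) and $w'$ to $w'y$ (where $y$ is the last symbol of $w'$), and asserts that $w'y$ is then a binary image of $wx$. No alphabet reduction ever takes place, so the collision issue you wrestle with simply does not arise in the paper's argument.

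Your proposal, by contrast, has a genuine gap precisely where you yourself locate ``the main obstacle'': the compensation argument is never carried out. You need, for at least one pair $\{a,b\}$, an injection from the collisions of $\pi$ into newly created abelian-square factors of $\pi(w)$, and nothing in the sketch supplies one. A collision class can contain many factors (every admissible $\{a,b\}$-pattern on the position set $S$ yields the same image under $\pi$), while a single near-miss occurrence in $w$ produces at most one new factor, and that new factor may already coincide with $\pi(g)$ for some non-colliding abelian-square factor $g$ of $w$, in which case it compensates nothing. The averaging idea over all $\binom{m}{2}$ pairs does not obviously rescue this either, since the gains for different merges live in different image words and are not additive. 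Until this step is made precise your argument is a strategy rather than a proof --- and since the whole content of the lemma is exactly that one can pass to a smaller alphabet without decreasing the count of distinct abelian-square factors, the missing step is essentially the entire lemma.
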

The binary word will be called a {\itshape binary image} of $w$.
\begin{proof}
By induction on $n$. For $n<=2$, the claim is clear.\\
Assume that the claim holds for a word $w$ of length $n$ and $w'$ is a binary image of $w$.
So, $wx$ with $x$ equals to the last symbol of $w$, has $k$
many distinct abelian-square factors, and has a length $n+1$. And, $w'y$ with $y$ equals to the last symbol of $w'$
is a binary image of $wx$ \qed
\end{proof}

\begin{conjecture} \cite{FiciM15}
Assume that a word with length $n$, and containing $k$
many distinct abelian-square factors, exists. Then a binary word of length $n$
containing at least $k$ many distinct abelian-square factors exists.
\end{conjecture}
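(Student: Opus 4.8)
The plan is to prove the conjecture by one structural mechanism, \emph{alphabet reduction}: I would show that any word $w$ over an alphabet $\Sigma$ with $|\Sigma|\ge 3$ and $k$ distinct abelian-square factors can be turned into a word of the \emph{same} length over an alphabet of size $|\Sigma|-1$ that still has at least $k$ distinct abelian-square factors, and then iterate down to a binary alphabet. (If one is willing to invoke the preceding Lemma there is also a quick finish: take the shortest prefix $w'$ of $w$ containing all $k$ factors --- its last symbol necessarily ends an abelian square --- apply the Lemma to $w'$, and pad the resulting binary word back up to length $n$, padding never destroying an abelian-square factor; but I will describe the self-contained route.) The basic tool is that a letter-merging morphism $\varphi$, which sends one letter $c$ to another and fixes all others, is length-preserving and one-sidedly compatible with abelian equivalence: if $u$ and $v$ have equal Parikh vectors then so do $\varphi(u)$ and $\varphi(v)$, since merging merely adds two coordinates of the Parikh vector together. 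Consequently every occurrence of an abelian-square factor of $w$ maps to an occurrence of an abelian-square factor of $\varphi(w)$, and possibly new ones appear; the \emph{only} way $\varphi$ can lose abelian-square factors is by collapsing two distinct ones to the same word.

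So the reduction step comes down to a counting inequality. Let $S$ be the set of distinct abelian-square factors of $w$; for a candidate merge $\varphi$ write $\mathrm{loss}(\varphi)$ for the number of identifications it forces on $S$ (so $|\varphi(S)| = |S|-\mathrm{loss}(\varphi)$) and $\mathrm{gain}(\varphi)$ for the number of abelian-square factors of $\varphi(w)$ not of the form $\varphi(s)$, $s\in S$; I need one merge with $\mathrm{gain}(\varphi)\ge\mathrm{loss}(\varphi)$. Two complementary strategies look viable. First, \emph{averaging}: range over all $\binom{|\Sigma|}{2}$ merges and try to show $\sum_\varphi(\mathrm{gain}(\varphi)-\mathrm{loss}(\varphi))\ge 0$, which forces a good merge to exist; here the loss side is manageable because a collision $\varphi(s_1)=\varphi(s_2)$ with $s_1\ne s_2$ is rigid --- the two factors must agree outside the positions carrying the merged pair --- and the gain side is counted length by length from positions where the two halves become abelian-equal only after the merge. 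Second, a \emph{structural choice}: pick $c$ to be a least-frequent letter, or one confined to a short window, bound the collisions it can cause directly, and offset them against the new abelian squares straddling its occurrences.

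The step I expect to be the genuine obstacle is exactly this $\mathrm{gain}\ge\mathrm{loss}$ accounting. Collisions are cheap to manufacture --- a word that uses several letters ``redundantly'' collapses many abelian squares at once --- so the argument must exploit that $w$ has a fixed finite length and that abelian squares of each given length occur at so many positions that destroying one is compensated by newly created ones of the same length; making this work uniformly is likely to require a carefully weighted average over merges rather than one clever merge, and I would spend most of the effort there. Two smaller points: the iteration has nothing to prove at its base (a binary alphabet is the target) and can always be applied while $|\Sigma|\ge 3$, since some pair of letters is then available. If neither averaging nor a structural choice can be pushed through in full generality, the fallback is to combine alphabet reduction with a classification of the length-$n$ words that maximise the number of abelian-square factors, reducing the conjecture to verifying it on that restricted family.
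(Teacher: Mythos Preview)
Your main route---shrink the alphabet one letter at a time via a merging morphism and iterate down to a binary alphabet---is not the paper's argument. The paper proceeds by induction on the \emph{length} $n$: it passes from a word of length $n$ to a word of length $n+1$ by appending a repeated last symbol and invokes Lemma~1 at each step; the alphabet is never touched. Your parenthetical alternative (take the shortest prefix carrying all $k$ abelian-square factors, apply Lemma~1 to it, then pad back to length $n$) is in fact closer in spirit to what the paper does than your headline approach, since both offload the entire difficulty onto Lemma~1; if you are willing to grant that Lemma, the parenthetical already finishes.

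On your self-contained alphabet-reduction route, there is a genuine gap, and you have located it yourself: the inequality $\mathrm{gain}(\varphi)\ge\mathrm{loss}(\varphi)$ for \emph{some} merge $\varphi$ is asserted as the target but never established. Both mechanisms you float---averaging over all $\binom{|\Sigma|}{2}$ merges, or choosing a rarest letter---remain heuristics; you give no argument that the averaged sum is nonnegative, nor any bound tying the number of collisions caused by a rare letter to the number of new abelian squares it creates. Everything else in your plan (the one-sided preservation of abelian equivalence under a merge, the iteration, the trivial base at $|\Sigma|=2$) is routine, so the whole proposal stands or falls on precisely the counting step you defer. As written it is a programme, not a proof: the missing idea is exactly a proof of the gain--loss inequality, and neither of your two suggested lines supplies one.
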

\begin{proof}
By induction on $n$. For $n<=2$, the claim is clear.\\
Assume that the claim holds for a word $w$ of length $n$. 
So, if the last symbol of $w$ is in a factor, then by lemma 1, $wx$ with $x$ equals to the last symbol of $w$ has a binary image of length $n+1$ with at least $k$ distinct abelian-square factors. If the last symbol is not in a factor, then also by lemma 
1, $wx$ has a binary image with at least $k+1$ distinct abelian-square factors

\qed
\end{proof}




%


\section*{Acknowledgements}
The author acknowledges, the financial support of this work from the Tunisian General Direction of Scientific Research (DGRST).

%
%
%
 \bibliographystyle{splncs04}
 \bibliography{bilio}
%





\end{document}